\documentclass[12pt,english]{article}
\usepackage[margin=1.5in]{geometry}

\usepackage{mdwlist}
\usepackage{enumerate}
\usepackage{amssymb,amsbsy,latexsym}
\usepackage{amsmath}
\usepackage{graphics, subfigure, float}
\usepackage{fp, calc}
\usepackage{hyperref}
\usepackage{url}

\usepackage{bm}

\usepackage{amscd,amsthm}

\usepackage{pst-all}
\usepackage{pstricks-add}
\usepackage{pst-func}
\newpsobject{showgrid}{psgrid}{subgriddiv=1,griddots=10,gridlabels=6pt}
\usepackage{verbatim, comment}
\usepackage{datetime}

\newtheoremstyle{theorem}{1em}{1em}{\slshape}{0pt}{\bfseries}{.}{ }{}
\theoremstyle{theorem}
\newtheorem{theorem}{Theorem}

\newtheorem*{theorem*}{Theorem}

\newtheorem*{claim*}{Claim}
\newtheorem*{lemma*}{Lemma}
\theoremstyle{remark}
\newtheorem{remark}{Remark}
\newtheorem*{remark*}{Remark}

\providecommand{\setN}{\mathbb{N}}
\providecommand{\setZ}{\mathbb{Z}}

\providecommand{\setR}{\mathbb{R}}

        \def\drawRect#1#2#3#4#5{
           \FPeval{\x2}{(#2) + (#4)} 
           \FPeval{\y2}{(#3) + (#5)} 
           \pspolygon[#1](#2,#3)(\x2,#3)(\x2,\y2)(#2,\y2)
        }

\usepackage[displaymath,textmath,graphics, subfigure, floats]{preview} 
\PreviewEnvironment{center} 
\PreviewEnvironment{pspicture} 

\title{A note on the $\Sigma_2^P$-completeness of the Frobenius number}
\author{Thomas Rothvoss\thanks{University of Washington, Seattle. Email: {\tt rothvoss@uw.edu}. Supported by NSF grant 2318620 \emph{AF: SMALL: The Geometry of Integer Programming and Lattices}.}}
\date{}

\begin{document}

\maketitle

\begin{abstract}
  Given a finite set $A$ of natural numbers whose greatest common divisor is one, the \emph{Frobenius number} $g(A)$ is the largest integer
  that is not a non-negative integer combination of the numbers in $A$. In a 2016 preprint,
  Matsubara states that given $A$ and $k$, deciding if $g(A) \geq k$ is $\Sigma_2^P$-complete.
  A decade has passed since without peer-reviewed publication of this result. At the same
  time, the community has found it difficult to verify this result. In this note, we give
  a write-up of the completeness proof based on Matsubara (2016).
\end{abstract}

\begin{remark}
\emph{The author used OpenAI's GPT 5.6 Sol to decode the proof of \cite{matsubara2016computationalcomplexityfrobeniusproblem}.  The writing here is 100\% due to the author.
  The author takes no  intellectual credit; the sole purpose of this note is to clarify the complexity status of computing the Frobenius number.}
\end{remark}

\section{Introduction}

Given a set $A = \{ a_1,\ldots,a_n\}$ of integers $a_1,\ldots,a_n \in \setN$,
we say that $x \in \setN$ is \emph{$A$-representable} if there are coefficients $y \in \setZ_{\geq 0}^n$ so that $x = \sum_{i=1}^n a_iy_i$. By Schur's Theorem (as reported in \cite{Brauer1942}; see \cite{Beihoffer_Hendry_Nijenhuis_Wagon_2005} for a more contemporary account) we know that for any set $A \subseteq \setN$ with $\gcd(A) := \gcd(a_1,\ldots,a_n) = 1$,
all large enough numbers are $A$-representable\footnote{In fact, Schur gave an explicit bound: if $2 \leq a_1 < \ldots < a_n$, then all integers $x \geq (a_1-1) \cdot (a_n-1)$ are $A$-representable.}
This motivates
the definition of the \emph{Frobenius number} $g(A)$ which is the largest integer that is \emph{not} $A$-representable\footnote{For definiteness, we set $g(A) := \infty$ if $\textrm{gcd}(A) > 1$.}. For example one can verify that $g(\{5,6\}) = 19$ as $19$ is not $\{5,6\}$-representable but every larger integer is $\{ 5,6\}$-representable. The Frobenius number has an intuitive interpretation as the \emph{coin problem}: given a coin system with $n$ different denominations $a_1,\ldots,a_n$, find the largest amount that cannot be paid exactly using such coins.  

In a landmark paper, Kannan~\cite{KannanFSTTCS1989,KannanCombinatorica92} proved that for any fixed $n$, there is a polynomial time algorithm to compute $g(A)$ for sets with $|A| =n$. In fact, Kannan solves the more general
problem of computing the covering radius $\mu(\setZ^n,K)$ for any rational polytope $K \subseteq \setR^n$ in polynomial time as long as $n$ is constant. However, this note focuses on the complexity-theoretic aspects
of the Frobenius number. For that purpose we define a decision variant of the Frobenius problem
in the form of the language
\[
  \texttt{FROB} := \{ (A,k) : g(A) \geq k \}
\]
Ram{\'i}rez-Alfons{\'i}n~\cite{RamrezAlfonsn1996ComplexityOT} proved that $\tt{FROB}$ is {\bf NP}-hard unter Turing reductions\footnote{In a Turing reduction one is allowed to use polynomially many oracle calls of $\texttt{FROB}$.}. This result was followed by Matsubara giving the exact classification: 
\begin{theorem}[Matsubara~\cite{matsubara2016computationalcomplexityfrobeniusproblem}]
${\tt FROB}$ is $\Sigma_2^P$-complete.
\end{theorem}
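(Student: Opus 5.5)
The statement has two parts, membership in $\Sigma_2^P$ and $\Sigma_2^P$-hardness, and I would prove them separately.

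\textbf{Membership.} I would use the footnoted bound of Schur. If $\gcd(A) > 1$, then $g(A) = \infty \geq k$, and this case is decided in polynomial time. Otherwise $g(A) < (a_1-1)(a_n-1)$, so every witness has polynomially many bits. Then $(A,k) \in \texttt{FROB}$ iff
\[
\exists x \in \{k,\ldots,(a_1-1)(a_n-1)\}\ \forall y \in \{0,\ldots,x\}^n :\ \textstyle\sum_i a_i y_i \neq x .
\]
Both quantified strings have polynomial length and the matrix is checkable in polynomial time. This is a $\Sigma_2^P$ predicate.

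\textbf{Hardness: source problem.} I would reduce from a $\Sigma_2^P$-complete problem of arithmetic flavour. Starting from $\exists\forall$-3SAT, i.e.\ $\exists x\,\forall y\,\psi(x,y)$ with $\psi$ in 3-DNF, I would first pass to a Subset-Sum-type intermediate problem. It has the form: given targets $t(x) = t_0 + \sum_j x_j u_j$ for $x \in \{0,1\}^m$ and numbers $b_1,\ldots,b_\ell$, decide whether there is an $x$ such that $t(x)$ is \emph{not} a $0/1$ combination of the $b_i$. I would build it with the standard digit-block encoding, one block per clause, one per $y$-variable, and one per $x$-variable, using a base larger than any possible carry. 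The representation $t(x) = \sum_i b_i z_i$ then exists iff some $y$ falsifies every term of $\psi$, i.e.\ iff $\neg\psi(x,y)$ for some $y$. So "$t(x)$ not representable" says exactly $\forall y\,\psi(x,y)$.

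\textbf{Hardness: building $A$.} Next I would turn this into a Frobenius instance. Choose a huge modulus $M$ and put into $A$ the generators $M + b_i$, padded with extra digit blocks that enforce each $b_i$ being used at most once and that count how many generators are used. Add further "filler" generators that make every integer outside a designated window representable. The intended window consists of the numbers $N(x) = cM + t(x)$ with $x \in \{0,1\}^m$, where $c$ is the forced count. Set $k$ to the smallest such $N(x)$. Then $g(A) \geq k$ iff some $N(x)$ is non-representable, which is exactly the $\exists\forall$ condition. The filler must also supply $\gcd(A)=1$ and must not create new representations of any $N(x)$. A generator such as $(\text{large multiple of } M)+1$ serves the gcd, and it fails the count constraint for every $N(x)$.

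\textbf{Main obstacle.} I expect the hardest step to be the covering argument. One must show that every integer $\geq k$ that is not of the form $N(x)$ is $A$-representable, while the filler still cannot represent any $N(x)$. Numbers with the wrong high-order count, numbers whose low-order digits violate the block structure, and numbers with a non-$0/1$ pattern in an $x$-block all have to be covered. My first attempt would add, for each digit block, "slack" generators that can fill any digit value except the exact pattern demanded by $t(x)$. These would be arranged so that using any slack generator forces a strictly different high-order count than $c$. Then a case analysis on the top digit closes the argument: numbers with the wrong top digit are filled freely, and those with the right top digit but a malformed lower part are filled using slack in the offending block. Finally, Schur's bound would handle all sufficiently large integers automatically, so only a bounded range needs this analysis.
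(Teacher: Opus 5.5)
Your membership argument is correct and is the same as the paper's, and your $\exists\forall$ Subset-Sum intermediate problem is a reasonable $\Sigma_2^P$-hard source. The gap is the step from it to a Frobenius instance. That step is the whole content of the hardness proof, and you do not carry it out.

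Your scheme needs $A$ and $k$ such that, regardless of the instance, every integer $\geq k$ other than the $2^m$ targets $N(x)$ is $A$-representable, while no $N(x)$ gains a spurious representation. The plan you sketch for this fails as stated:
\begin{itemize}
\item Suppose every use of a slack generator forces a count different from $c$. Take an integer whose count part is $c$ but whose lower part is malformed, for example $N(x)+1$. By your own case analysis it can only be filled with slack generators, yet these are forbidden at count $c$. Since it lies above $k$, it must be representable, so your construction would make $g(A)\geq k$ unconditionally.
\item Suppose instead slack generators may be used at count $c$. Then nothing prevents sums of them, together with main generators, from hitting the exact pattern of some $N(x)$. ``Fills every value except the demanded pattern'' is not a property preserved under addition, so it cannot be imposed generator by generator.
\item Schur's bound does not shrink the task. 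It is at least $(a_1-1)^2\approx M^2$, far above $k\approx cM$. So you would still have to cover a huge range of count-windows, including all shifts $N(x)+jM$, with representations that never pass through $N(x)$ itself.
\end{itemize}

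The idea you are missing is to avoid singling out special targets at all. The paper reduces the $\forall\exists$ problem ${\tt Q3DM}$ to $\overline{\tt FROB}$. It makes every integer in the single window $[k,k+A_{\min})$, with $k=nb^{3n+m}$ and $A_{\min}=b^{3n+m}\in A$, a legitimate instance of the universal quantifier:
\begin{itemize}
\item $S_1$ is read off as the set of $M_1$-positions where $x$ has a nonzero digit.
\item Every generator carries wildcard digits: $t\in[n]$ at position $s(i)$, and $d_r\in\{0,\ldots,n\}$ at the vertex positions.
\item Given a perfect matching $S_1\cup S_2$, one summand per edge absorbs the arbitrary digit values of $x$, using exactly $n$ summands and no carries.
\item Everything above the window follows by adding copies of $A_{\min}$, so neither filler generators nor a Schur-type range analysis is needed.
\end{itemize}
For the converse, only one target per $S_1$ is tested: all vertex digits equal to $n$, and $M_1$-digits equal to $n$ on $S_1$ and $0$ elsewhere. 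A no-slack count then forces $n$ distinct edges, each with multiplicity one, forming a perfect matching $T$ with $T\cap M_1=S_1$.

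To rescue your route, you would need an analogous design. The outer choice $x$ should be encoded by digit supports, and the generators should carry enough wildcard digits that every integer in one period window is meaningful.
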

However, at the time of this writing, no formal publication of \cite{matsubara2016computationalcomplexityfrobeniusproblem} is available and the writing makes verification difficult\footnote{See e.g. the discussion in \url{https://cstheory.stackexchange.com/questions/42499/computational-complexity-of-the-frobenius-problem}.}
The purpose of this note is to clarify the proof of \cite{matsubara2016computationalcomplexityfrobeniusproblem}.

\section{Preliminaries}

We recap the necessary complexity theory background. We recommend the standard textbook by Arora and Barak~\cite{AroraBarak2009} for more details.
For $n \in \setN$, we use $[n] = \{1,\ldots,n\}$. A \emph{language} is any set of strings $L \subseteq \{ 0,1\}^*$.
The complexity class $\Sigma_2^P$ is the set of languages $L \subseteq \{ 0,1\}^*$ that can be written in the form
\[
 L = \big\{ x \in \{ 0,1\}^* \mid \exists u \in \{ 0,1\}^{p_1(|x|)} \; \forall v \in \{ 0,1\}^{p_2(|x|)} : M(x,u,v) = 1\big\}
\]
where $p_1$ and $p_2$ are polynomials and $M$ is a deterministic polynomial time algorithm. Analogously, $\Pi_2^P$ is the set of languages
\[
 L = \big\{ x \in \{ 0,1\}^* \mid \forall u \in \{ 0,1\}^{p_1(|x|)} \; \exists v \in \{ 0,1\}^{p_2(|x|)} : M(x,u,v) = 1\big\}
\]
We note that $\Pi_2^P$ is the complementary class to $\Sigma_2^P$ in the sense that $L \in \Sigma_2^P \Leftrightarrow \bar{L} \in \Pi_2^P$. Both classes generalize to $\Sigma_k^P$ and $\Pi_k^P$ which have $k$ alternating quantifiers and the union $\bigcup_{k \geq 0} \Sigma_k^P = \bigcup_{k \geq 0} \Pi_k^P$ forms the \emph{polynomial hierarchy} ${\bf PH}$. 

For two languages $L_1,L_2 \subseteq \{ 0,1\}^*$ we write $L_1 \leq_p L_2$ if there is a polynomial-time computable map $f : \{ 0,1\}^* \to \{ 0,1\}^*$ so that
\[
 x \in L_1  \quad \Longleftrightarrow \quad f(x) \in L_2 \quad \forall x \in \{ 0,1\}^*
\]
This is also called a \emph{Karp-reduction}. 
We say that a language $L \subseteq \{ 0,1\}^*$ is \emph{$\Sigma_2^P$-hard} if for every $L' \in \Sigma_2^P$ one has $L' \leq_p L$.
Moreover $L$ is $\Sigma_2^P$-complete if (i) $L \in \Sigma_2^P$ and (ii) $L$ is $\Sigma_2^P$-hard. 
Analogously we define $\Pi_2^P$-completeness. The relation $\leq_p$ is transitive and so in order to
prove say $\Pi_2^P$-hardness of $L$, it suffices to take any $\Pi_2^P$-hard problem $L'$ and
show that $L' \leq_p L$. A suitable hard problem will be the following:
\begin{quote}
\emph{Quantified 3-dimensional matching} (${\tt Q3DM}$). Given disjoint sets $V_1,V_2,V_3$ with $n = |V_1| = |V_2| = |V_3|$ and sets of hyperedges $M_1,M_2 \subseteq V_1 \times V_2 \times V_3$ with $M_1 \cap M_2 = \emptyset$. Decide if for all $S_1 \subseteq M_1$ there is a set $S_2 \subseteq M_2$ so that the union $S_1 \cup S_2$ is a perfect matching.
\end{quote}
Here, an edge $e \in M_1 \cup M_2$ is actually a hypergraph edge with $|e| = 3$. As usually,
a set $S \subseteq M_1 \cup M_2$ is called a \emph{perfect matching} if for all $v \in V_1 \cup V_2 \cup V_3$
there is precisely one edge $e \in S$ with $v \in e$. Spelling out the order of the quantifiers we can write
\[
 {\tt Q3DM} = \big\{ (M_1,M_2) : \forall S_1 \subseteq M_1 \; \exists S_2 \subseteq M_2: S_1 \cup S_2\textrm{ is a perfect matching} \big\}
\]
We rely on the following classical result by McLoughlin:
\begin{theorem}[McLoughlin~\cite{McLoughlinHardness1984}]
  The language ${\tt Q3DM}$ is $\Pi_2^P$-complete.
\end{theorem}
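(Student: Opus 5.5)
Membership in $\Pi_2^P$ is immediate from the quantifier form of ${\tt Q3DM}$. The universal witness is $S_1 \subseteq M_1$ and the existential witness is $S_2 \subseteq M_2$, each encoded by $|M_1|$ and $|M_2|$ bits. Checking that $S_1 \cup S_2$ covers every vertex of $V_1 \cup V_2 \cup V_3$ exactly once takes polynomial time. So the work is in hardness, which I plan to obtain by a Karp reduction from the canonical $\Pi_2^P$-complete problem $\forall\exists$-3SAT: given a 3-CNF formula $\varphi(x,y)$ with clauses $C_1,\ldots,C_m$, decide whether $\forall x \in \{0,1\}^p \, \exists y \in \{0,1\}^q : \varphi(x,y)=1$ (Stockmeyer–Wrathall).

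The construction will follow the classical Garey–Johnson reduction from 3SAT to 3DM.
\begin{itemize*}
\item \emph{Variable gadgets.} Every variable $z \in \{x_1,\ldots,x_p,y_1,\ldots,y_q\}$ gets a ring of $2m$ triangles on internal vertices, with one private tip vertex per triangle. The ring admits exactly two ways of covering its internal vertices, which expose either the ``true'' tips or the ``false'' tips.
\item \emph{Clause gadgets.} Each clause $C_j$ gets two private vertices and three edges linking them to the appropriate free tips of its literals.
\item \emph{Garbage collection.} Garbage-collection edges absorb the leftover tips.
\end{itemize*}
I will place the vertices carefully into the three colour classes $V_1,V_2,V_3$ and pad so that all classes have equal size $n$. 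All these edges go into $M_2$. With this, the standard argument shows that $\varphi$ is satisfiable iff a perfect matching exists, and that perfect matchings correspond to satisfying assignments.

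To encode the universal quantifier, each universal variable $x_i$ gets three fresh private vertices $a_i \in V_1$, $b_i \in V_2$, $c_i \in V_3$, and $M_1$ receives exactly the single edge $e_i=(a_i,b_i,c_i)$, so $M_1 = \{e_1,\ldots,e_p\}$. These edges are pairwise disjoint, so every $S_1 \subseteq M_1$ is a matching. There are therefore no ``illegal'' universal moves that could trivially make the instance a no-instance. The interpretation is $x_i = \text{true}$ iff $e_i \in S_1$. In $M_2$ I will add alternative edges that cover $a_i,b_i,c_i$ only by also consuming specific vertices of the ring of $x_i$; for instance, edges $(a_i,b_i,u_i)$ and $(\cdot,\cdot,c_i)$-type edges through the ring. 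These are to be designed so that, when $e_i \notin S_1$, the private vertices can be covered only if the ring is in the ``false'' orientation. When $e_i \in S_1$, those alternative edges become unusable, and the ring must be in the ``true'' orientation. This needs a small number of extra auxiliary vertices so that the ring sizes still balance in both cases.

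The correctness argument then reads as follows. For every $S_1$, a completion $S_2$ exists iff the clause and garbage gadgets can be completed given the forced orientations of the $x$-rings and free orientations of the $y$-rings, i.e.\ iff $\exists y:\varphi(x(S_1),y)=1$. Hence $(M_1,M_2) \in {\tt Q3DM}$ iff $\forall x\,\exists y\,\varphi$. The main obstacle is precisely the forcing gadget for $e_i$. It must be 3-partite (each edge uses one vertex of each $V_k$). It must force each orientation in the correct case. And it must keep the vertex counts identical in both cases, so that the garbage collectors (whose number is fixed) fit exactly regardless of $S_1$. My first attempt will be to glue $a_i,b_i,c_i$ onto one extra triangle inserted into the ring of $x_i$, so that omitting $e_i$ shifts the parity of the ring cover. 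I will check the counts by hand on a single variable before writing the general proof.
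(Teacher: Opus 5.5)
The paper gives no proof of this statement; it simply cites McLoughlin, so your reduction has to stand on its own. Membership, the choice of $\forall\exists$-3SAT as source, taking $M_1=\{e_1,\ldots,e_p\}$ to be a matching so that $S_1\mapsto x(S_1)$ is a bijection onto $\{0,1\}^p$, and the Garey--Johnson skeleton for the existential variables, clauses and garbage collection are all sound. \textbf{The gap is the forcing gadget.} It is the only non-standard step, you leave it undesigned, and the mechanism you describe does not work as stated. Suppose you only \emph{add} alternative $M_2$-edges through $a_i,b_i,c_i$ (such as $(a_i,b_i,u_i)$ and some $(\cdot,\cdot,c_i)$) to an \emph{intact} Garey--Johnson ring. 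Then for $e_i\in S_1$ those edges drop out, but the ring is still a full even cycle with both covers available. Nothing forces the ``true'' orientation, so ``the alternative edges become unusable, hence the ring must be true'' is a non sequitur. With only one-sided forcing, for one value of $x_i$ the existential player may pick the orientation freely. The instance then effectively tests only one value of $x_i$, and the equivalence with $\forall x\,\exists y\,\varphi$ is lost.

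What is needed is that the ring itself has no ``false'' cover unless $e_i$ is absent, i.e.\ $a_i,b_i$ must be spliced into the cycle of internal vertices. Use your notation, with $m$ the number of clauses. Let the ring of $x_i$ have internal vertices $\alpha_1,\ldots,\alpha_m\in V_1$ and $\beta_1,\ldots,\beta_m\in V_2$, and tips $t_j,f_j\in V_3$, where clause $C_j$ attaches to $t_j$ for the literal $x_i$ and to $f_j$ for $\bar x_i$. Let $M_2$ contain
\[
(\alpha_j,\beta_j,f_j)\ (j\in[m]),\qquad (\alpha_{j+1},\beta_j,t_j)\ (j\in[m-1]),\qquad (a_i,\beta_m,t_m),\qquad (\alpha_1,b_i,c_i).
\]
The last two edges replace the closing edge $(\alpha_1,\beta_m,t_m)$ of the standard ring. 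Put $e_i=(a_i,b_i,c_i)\in M_1$, and let the internal vertices and $a_i,b_i,c_i$ lie in no other edges.
\begin{itemize}
\item If $e_i\in S_1$, the last two edges are blocked, so $\alpha_1$ lies only in $(\alpha_1,\beta_1,f_1)$. The chain then forces every $(\alpha_j,\beta_j,f_j)$, leaving all $t_j$ free.
\item If $e_i\notin S_1$, then $a_i$ and $b_i$ each lie in a single $M_2$-edge. This forces $(a_i,\beta_m,t_m)$ and $(\alpha_1,b_i,c_i)$, and then every $(\alpha_{j+1},\beta_j,t_j)$, leaving all $f_j$ free.
\end{itemize}
In both cases the gadget consumes $c_i$ plus exactly $m$ tips of one kind, so no auxiliary vertices are needed. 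You get $|V_1|=|V_2|=|V_3|=2m(p+q)+p$ with the usual $m(p+q-1)$ garbage collectors (not attached to $c_i$), and the rest of your argument goes through. Your $(a_i,b_i,u_i)$ pattern can also be made to work: delete the closing edge $(\alpha_1,\beta_m,t_m)$ and add $(a_i,b_i,t_m)$ and $(\alpha_1,\beta_m,c_i)$ instead.
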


\section{The main result}

Since ${\tt Q3DM}$ is in $\Pi_2^P$ rather than $\Sigma_2^P$, we work with the complementary
Frobenius language\footnote{Technically speaking this is not the precise set-theoretic complement of ${\tt FROB}$ as neither contains strings that are not syntactically valid instances. But that difference is a polynomial time recognizable subset and hence it can be ignore for complexity purposes. This is standard procedure, see e.g. \cite{AroraBarak2009}.}
\begin{equation} \label{eq:QuantorRepForFrobenius}
 \overline{\tt FROB} = \big\{ (A,k): g(A) < k\big\} = \Big\{ (A,k) \mid \forall x \geq k: \exists y \in \setZ_{\geq 0}^n : \sum_{i=1}^n a_iy_i = x \Big\}
\end{equation}
As discussed above, ${\tt FROB}$ is $\Sigma_2^P$-complete if and only if $\overline{{\tt FROB}}$ is $\Pi_2^P$-complete. Hence it suffices to show the following:
\begin{theorem}
$\overline{{\tt FROB}}$ is $\Pi_2^P$-complete.
\end{theorem}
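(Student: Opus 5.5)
Membership in $\Pi_2^P$ is the easy half, and I would do it first using the quantifier form \eqref{eq:QuantorRepForFrobenius}. First check $\gcd(A)=1$ in polynomial time. Then use Schur's bound: every $x \geq (a_1-1)(a_n-1)$ is $A$-representable. So the universal quantifier only needs to range over $x \in [k, (a_1-1)(a_n-1))$, and these $x$ have polynomially many bits. For such $x$ a witness $y$ satisfies $0 \le y_i \le x$, so it also has polynomial bit length. Checking $\sum_i a_i y_i = x$ is deterministic polynomial time. This gives a $\forall\exists$ formula of the required shape.

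For hardness I would reduce ${\tt Q3DM} \leq_p \overline{\tt FROB}$ by a digit encoding in a large base $B$, say $B = 2^{\mathrm{poly}(n,|M_1|+|M_2|)}$. $B$ must be large enough that no sum of polynomially many generators ever carries between digits. Each number gets one digit per vertex $v \in V_1\cup V_2\cup V_3$, one \emph{marker} digit per edge $e\in M_1$, and one \emph{counter} digit.

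The generators would be the following. For $e\in M_2$, take $a_e = \chi(e) + (\text{counter unit})$, where $\chi(e)$ puts a $1$ in the three vertex digits of $e$. For $e\in M_1$, take two generators $a_e^{1} = \chi(e) + \mathbf{1}_{d_e} + (\text{counter unit})$ and $a_e^{0} = \mathbf{1}_{d_e}$, where $\mathbf{1}_{d_e}$ is a $1$ in the marker digit of $e$.

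The intended targets are $x_{S_1} = \sum_v \mathbf{1}_v + \sum_{e\in M_1}\mathbf{1}_{d_e}\cdot c_e + n\cdot(\text{counter unit})$, shifted by a common offset. The values $c_e$ are chosen so that the universal player's choice of $S_1$ is readable from $x$. The offset $k$ is chosen so that every $x\ge k$ of this shape corresponds to some $S_1$. The simplest realisation: digit $d_e$ of $x$ is $2$ if $e\in S_1$ and $1$ otherwise, and there is a generator $2\cdot\mathbf{1}_{d_e}$ but not a generator $\chi(e)+2\cdot\mathbf{1}_{d_e}$. Then a representation must use $a_e^1$ exactly when $e\in S_1$, by a parity/size argument on the marker digit. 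With no carries, the vertex digits then force $S_1\cup S_2$ to be a perfect matching, where $S_2$ is the set of $M_2$-edges used. The counter digit rules out using too many or too few edges.

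The main obstacle is that $\overline{\tt FROB}$ quantifies over \emph{all} $x\ge k$, not just the structured $x_{S_1}$. So I must add a family of ``garbage-collecting'' generators that represent every $x\ge k$ whose digit pattern is \emph{not} a legal encoding. Examples are $x$ with some vertex digit $\ge 2$, a marker digit outside $\{1,2\}$, or a wrong counter value. These generators must not give new representations of legal $x_{S_1}$.

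What I would try first is an extra \emph{flag} digit of high order. Every garbage generator has flag $\geq 1$ and legal targets have flag $0$; above the flag sit all $x$ with flag $\ge 1$. Alongside the garbage generators I would add small unit-type generators (e.g. $B^j$ times flag) that cover every residue once the flag is set. With these, any $x$ with flag $\ge1$ is representable, and $\gcd(A)=1$ is automatic.

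The delicate part is the legal window. I would set $k$ to be the smallest legal encoding, and arrange that within $[k, \text{flag}\cdot B^{top})$ every integer is either a legal $x_{S_1}$ or has some digit pattern that a single nonnegative combination of ``local'' generators represents. For example, include generators $\mathbf{1}_v$ times $(B-\text{stuff})$ and arrange digit ranges so that any deviation in a digit can be absorbed. The verification needs a case analysis by the lowest offending digit. That case analysis, together with showing that the absorbing generators cannot fake a matching for legal targets (again via no-carry plus the counter digit), is where I expect most of the work.
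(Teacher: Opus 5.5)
Your membership argument is correct and is the paper's. The hardness part has a genuine gap, at exactly the point you call ``the main obstacle.''

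In the YES case \emph{every} integer $x\ge k$ must be representable, equivalently every integer in the window $[k,k+A_{\min})$. With a huge base $B$ and rigid $0/1$ vertex digits, almost all of these integers are not legal encodings, and you give no construction of the absorbing generators. The flag digit cannot help inside the legal window. Any generator with a nonzero high-order flag digit exceeds every flag-$0$ integer, so all garbage below the flag threshold must be produced by flag-$0$ generators, which is the original problem.

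The pieces you do specify also fail:
\begin{itemize}
\item Since $a_e^0=\mathbf{1}_{d_e}$ is itself a generator, marker digit $1$ can come from $a_e^0$ instead of $a_e^1$, and marker digit $2$ from $2a_e^0$. So the use of $a_e^1$ is neither forced nor forbidden, and the universal choice $S_1$ is not enforced at all.
\item ``No carries'' is not automatic. A representation may use unboundedly many copies of a generator that has no counter unit, such as $a_e^0$.
\end{itemize}

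The idea that removes the obstacle is the opposite design: a \emph{small} base $b=n+1$ with \emph{flexible} generators. For every edge, the paper puts into $A$ all numbers $b^{3n+m}+\sum_r d_r b^{p(r,j_r)}$ with arbitrary $d_r\in\{0,\ldots,n\}$. For $e_i\in M_1$ these numbers also carry a marker term $t\,b^{s(i)}$ with arbitrary $t\in[n]$. It then sets $k=nb^{3n+m}$.

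\begin{itemize}
\item \emph{Every window integer is legal.} Every integer in $[k,k+b^{3n+m})$ has top digit $n$ and all lower digits in $\{0,\ldots,n\}$. So each one encodes $S_1=\{e_i: z_{s(i)}>0\}$, and no garbage collection is needed.
\item \emph{YES direction.} For each edge of the perfect matching $S_1\cup S_2$, pick the generator that copies the digits of $x$ in that edge's positions.
\item \emph{Bounded number of generators.} Every generator has a $1$ in the top position, so any representation of a window integer uses at most $n$ generators.
\item \emph{NO direction.} Take the single target with all vertex digits equal to $n$ and marker digit $n$ exactly on $S_1$. At most $n$ generators each contribute at most $n$ to at most three vertex digits, against a required total of $3n^2$. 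This no-slack count rules out carries and forces a perfect matching.
\item \emph{Marker forcing.} Position $s(i)$ is touched only by generators of $e_i$, and these always put a nonzero digit there. A zero marker digit therefore forbids $e_i$, and a nonzero one forces it.
\end{itemize}
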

\begin{proof}
  We know that  $\overline{{\tt FROB}} \in \Pi_2^P$ from the characterization in~\eqref{eq:QuantorRepForFrobenius} together with the fact that $\textrm{gcd}(A)$ can be computed in polynomial time and the observation that Schur's Theorem gives an explicit upper bound on the Frobenius number whose encoding length is polynomial in the encoding length of $A$.

  In order to show $\Pi_2^P$-hardness we will prove that
  ${\tt Q3DM} \leq_p \overline{{\tt FROB}}$. Fix an instance $(M_1,M_2)$ for ${\tt Q3DM}$ with vertex sets $V_1,V_2,V_3$. We may assume that $|V_1|=|V_2|=|V_3|=n$,  $M_1$ is a matching and $M_2$ is non-empty since outwise it is trivial to decide if $(M_1,M_2) \in {\tt Q3DM}$. We write $V_r = \{ v_{r,1},\ldots,v_{r,n}\}$ for $r \in \{ 1,2,3\}$ and $V := V_1 \dot{\cup} V_2 \dot{\cup} V_3$.
  For a vertex $v_{r,j} \in V$ with $r \in [3]$, $j \in [n]$, we define an index $p(r,j) := (r-1) \cdot n + (j-1)$. The $p$-index simply maps the $3n$ vertices in $V$ bijectively to $\{ 0,\ldots,3n-1\}$.
  Suppose $M_1 = \{ e_1,\ldots,e_m \}$ are the edges in the matching.
  We give the edge $e_i$ the index $s(i) := 3n + (i-1)$. That means the function $s$ maps the edges in  $M_1$ bijectively to the positions $\{ 3n,\ldots,3n+m-1\}$.

  Now we are ready to define the Frobenius instance. We will represent numbers using the base $b := n+1$.
  \begin{eqnarray*}
    A_1 &:=& \Big\{ b^{3n+m} + t b^{s(i)} + \sum_{r=1}^3 d_{r}b^{p(r,j_{r})} \mid e_i=(v_{1,j_1},v_{2,j_2},v_{3,j_3}) \in M_1;  t \in [n], d_{r} \in \{ 0,\ldots,n\} \Big\} \\
    A_2 &:=& \Big\{ b^{3n+m} + \sum_{r=1}^3 d_{r}b^{p(r,j_r)} \mid (v_{1,j_1},v_{2,j_2},v_{3,j_3}) \in M_2; d_{r} \in \{ 0,\ldots,n\} \Big\} \\
             A &:=& A_1 \cup A_2\\
    k &:=& nb^{3n+m}
  \end{eqnarray*}
We may assume that $m \leq n^3$ and so $|A|$ and the encoding length of all numbers in $A$ are at most polynomial in $n$. 
  To understand the construction it will be useful to consider the $b$-ary representation in which
  each number $a \in A_1$ has at most $5$ non-zero digits while each $a \in A_2$ has at most 4 non-zero digits, see also Figure~\ref{fig:bAryRepOfA}.
    \begin{figure}
  \begin{center}
    \psset{unit=0.6cm}
    \begin{pspicture}(0,-1)(20,4)
      \psline[linewidth=1.5pt](22,1)(19,1)(19,0)(22,0) \rput[c](20.5,0.5){$\dots$}
      \rput[c](0,2){\psline[linewidth=1.5pt](22,1)(19,1)(19,0)(22,0)\rput[c](20.5,0.5){$\dots$}}
      \multido{\N=0+4}{3}{
        \drawRect{fillstyle=solid,fillcolor=black!5!white,linewidth=1.5pt}{\N}{2}{4}{1}
        \drawRect{fillstyle=solid,fillcolor=black!5!white,linewidth=1.5pt}{\N}{0}{4}{1}
      }
      \drawRect{fillstyle=solid,fillcolor=black!10!white,linewidth=1.5pt}{12}{2}{6}{1}
      \drawRect{fillstyle=solid,fillcolor=black!10!white,linewidth=1.5pt}{12}{0}{6}{1}
      \drawRect{fillstyle=solid,fillcolor=black!15!white,linewidth=1.5pt}{18}{2}{1}{1}
      \drawRect{fillstyle=solid,fillcolor=black!15!white,linewidth=1.5pt}{18}{0}{1}{1}
      \rput[r](-0.5,2.6){$\begin{array}{c} a \in A_1\textrm{ from} \\ e_i \in M_1 \end{array}$}
      \rput[r](-0.5,0.4){$\begin{array}{c} a \in A_2\textrm{ from } \\ e \in M_2 \end{array}$}
      \psline{->}(0.5,-0.5)(0.5,0)\rput[c](0.5,-1){pos $0$}
      \rput[c](4,0){\psline{->}(0.5,-0.5)(0.5,0)\rput[c](0.5,-1){pos $n$}}
      \rput[c](8,0){\psline{->}(0.5,-0.5)(0.5,0)\rput[c](0.5,-1){pos $2n$}}
      \rput[c](12,0){\psline{->}(0.5,-0.5)(0.5,0)\rput[c](0.5,-1){pos $3n$}}
      \rput[c](18,0){\psline{->}(0.5,-0.5)(0.5,0)\rput[c](0.5,-1){pos $3n+m$}}
      \rput[c](2,1.5){$V_1$}
      \rput[c](6,1.5){$V_2$}
      \rput[c](10,1.5){$V_3$}
      \rput[c](15,1.5){$M_1$}
      \multido{\N=0+2}{2}{
        \drawRect{fillstyle=solid,fillcolor=blue!40!white,linecolor=blue}{2}{\N}{1}{1}
        \drawRect{fillstyle=solid,fillcolor=blue!40!white,linecolor=blue}{5}{\N}{1}{1}
        \drawRect{fillstyle=solid,fillcolor=blue!40!white,linecolor=blue}{10}{\N}{1}{1}
        \drawRect{fillstyle=solid,fillcolor=blue!40!white,linecolor=blue}{18}{\N}{1}{1}
        \rput[c](0,\N){\rput[c](2.5,0.5){$*$}}
        \rput[c](0,\N){\rput[c](5.5,0.5){$*$}}
        \rput[c](0,\N){\rput[c](10.5,0.5){$*$}}
        \rput[c](0,\N){\rput[c](18.5,0.5){$1$}}
      }
      \drawRect{fillstyle=solid,fillcolor=blue!40!white,linecolor=blue}{14}{2}{1}{1}
      \rput[c](14.5,2.5){$[n]$}
      \multido{\N=0+2}{2}{\rput[c](0,\N){\psline[linewidth=1.5pt](22,1)(0,1)(0,0)(22,0)}}
      \rput[c](2,0){\psline{->}(0.5,3.5)(0.5,3.0)\rput[c](0.0,4.0){pos $p(1,j_1)$}}
      \rput[c](5,0){\psline{->}(0.5,3.5)(0.5,3.0)\rput[c](0.8,4.0){pos $p(2,j_2)$}}
      \rput[c](10,0){\psline{->}(0.5,3.5)(0.5,3.0)\rput[c](0.5,4.0){pos $p(3,j_3)$}}
      \rput[c](14,0){\psline{->}(0.5,3.5)(0.5,3.0)\rput[c](0.5,4.0){pos $s(i)$}}
    \end{pspicture}
    \caption{$b$-ary representation of number $a$ arising from edge $e_i = (v_{1,j_1},v_{2,j_2},v_{3,j_3}) \in M_1$ (top) and $e = (v_{1,j_1},v_{2,j_2},v_{3,j_3})\in M_2$ (bottom). Here $*$ means any number in $\{ 0,\ldots,n\}$.\label{fig:bAryRepOfA}}
  \end{center}
  \end{figure}
  We will prove that $(M_1,M_2) \in {\tt Q3DM} \Leftrightarrow (A,k) \in \overline{\texttt{FROB}}$. Writing out the definitions this means we prove 
  \begin{equation} \label{eq:ReductionEquivalence}
\left( \begin{array}{c} \forall S_1 \subseteq M_1 \; \exists S_2 \subseteq M_2:\\ S_1 \cup S_2\textrm{ is perfect matching}\end{array} \right)  \Longleftrightarrow  \left( \begin{array}{c} \textrm{all integers} \geq k \\ \textrm{are }A\textrm{-representable} \end{array} \right)
  \end{equation}
  {\bf Claim I.} \emph{``$\Rightarrow$'' in \eqref{eq:ReductionEquivalence}.} \\
  {\bf Proof of Claim I.} We assume that the LHS of \eqref{eq:ReductionEquivalence} holds. Fix an integer $x \geq k$. We need to prove that $x$ is $A$-representable. We observe that $A_{\min} := \min\{ a \mid a \in A \} = b^{3n+m}$, attained by setting $d_1 = d_2 = d_3 = 0$ for a number arising from any edge in $M_2$.
  It suffices to prove the claim for $x$ with $nb^{3n+m} = k \leq x \leq k+A_{\min} -1 = b^{3n+m+1}-1$ since
  we can then add integral multiples of $A_{\min}$ to cover all integers in $[k,\infty)$.
  Hence we can write $x$ in its $b$-ary representation as
  \[
x = n b^{3n+m} + \sum_{\ell=0}^{3n+m-1} z_{\ell}b^{\ell} \quad \textrm{where} \quad z_{\ell} \in \{0,\ldots,n\}
\]
We set $S_1 := \{ e_i \in M_1 \mid z_{s(i)} > 0\}$. Then by assumption,
there is a set $S_2 \subseteq M_2$ so that $S_1 \cup S_2$ is a perfect matching. In particular $|S_1| + |S_2|=n$.
For each edge $e \in S_1 \cup S_2$ we will choose a single number $a_e \in A$
so that  $x = \sum_{e \in S_1 \cup S_2} a_e$. The construction of the numbers
is as follows:
\begin{itemize}
\item Case $e = e_i \in S_1$. We write the edge as $e = (v_{1,j_1},v_{2,j_2},v_{3,j_3})$. Then we set
  \[
   a_{e_i} := b^{3n+m} + z_{s(i)} b^{s(i)} + \sum_{r=1}^3 z_{p(r,j_r)} b^{p(r,j_r)}
  \]
\item Case $e \in S_2$. Using the same notation as above, we make the same choice minus the middle term, i.e.
  \[
   a_{e} := b^{3n+m} + \sum_{r=1}^3 z_{p(r,j_r)} b^{p(r,j_r)}    
  \]
\end{itemize}
Since $S_1 \cup S_2$ is a perfect matching, each digit $\ell \in \{ 0,\ldots,3n-1\}$ (belonging to  $V$) receives one designated contribution which matches $z_{\ell}$. Positions $s(i)$ with $e_i \in S_1$ receive $z_{s(i)} \in [n]$; if $e_i \notin S_1$ then the position is not in the support of any number.
Moreover we have selected exactly $n$ numbers and so the $(3n+m)$-th digit receives $n$. No carry-over occurs.
Hence indeed $x = \sum_{e \in S_1 \cup S_2} a_e$.  \qed

{\bf Claim II.} \emph{``$\Leftarrow$'' in \eqref{eq:ReductionEquivalence}.} \\
{\bf Proof of Claim II.}
For the converse, we assume that all integers $\geq k$ are $A$-representable. Fix any set $S_1 \subseteq M_1$; we need to show that there is a set $S_2 \subseteq M_2$ so that $S_1 \cup S_2$ is a perfect matching. Consider the number
 \begin{equation} \label{eq:xRep1}
  x := nb^{3n+m} + n\sum_{\ell=0}^{3n-1} b^{\ell} + n\sum_{e_i \in S_1} b^{s(i)} 
\end{equation}
By assumption we can write
\begin{equation} \label{eq:xRep2}
  x = \sum_{a \in A} y_a a
\end{equation}
where $y \in \setZ_{\geq 0}^A$. For every number $a \in A$ we choose a witness denoted by  $e_a \in M_1 \cup M_2$ which is used to generate the number.
Since $x < (n+1)b^{3n+m}$ and $A_{\min} = b^{3n+m}$ we can infer that $\|y\|_1 \leq n$.
We claim that when adding up $y_a \cdot a$ over $a \in A$ in base $b$, no carry-over occurs. To see this, consider the restriction to the digits in $V$ (which are the lower order digits). Each number $a \in A$ contributes at most $y_a \cdot n$ to at most $3$ digits in $V$ while $\|y\|_1 \leq n$. At the same time, from~\eqref{eq:xRep1} we know that each of the $|V| = 3n$ digits receives a contribution of $n$. This may add up,
but it leaves no room for slack. In particular we necessarily have $\|y\|_1 = n$ and no carry-over can
occur because then the contribution of digit $\ell$ to digit $\ell' > \ell$ would inflate by a factor of $b^{-(\ell'-\ell)} < 1$. By the same ``no slack'' argument we know that $y_a \in \{ 0,1\}$ for all $a \in A$ and we also know that for distinct $a,a' \in \textrm{supp}(y)$ one has $e_a \neq e_{a'}$.
We define the edge set  $T := \{ e_a : a \in A\textrm{ and }y_a = 1\}$. From the discussion above we already know that $T$ is a perfect matching. Since every edge contributes at most one number to $\textrm{supp}(y)$ and $\|y\|_{\infty} \leq 1$, we know that no carry-over can occur in the indices belonging to $M_1$ either. 
In particular by \eqref{eq:xRep1} this implies that  $T \cap M_1 = S_1$. Then $S_2 := T \setminus S_1$ satisfies the claim.  
\end{proof}

\bibliographystyle{alphaurl} 
\bibliography{frobenius}

@misc{matsubara2016computationalcomplexityfrobeniusproblem,
      title={The Computational Complexity of the Frobenius Problem}, 
      author={Shunichi Matsubara},
      year={2016},
      eprint={1602.05657},
      archivePrefix={arXiv},
      primaryClass={cs.CC},
      url={https://arxiv.org/abs/1602.05657}, 
}

@inproceedings{KannanFSTTCS1989,
  author       = {Ravi Kannan},
  editor       = {C. E. Veni Madhavan},
  title        = {The Frobenius Problem},
  booktitle    = {Foundations of Software Technology and Theoretical Computer Science,
                  Ninth Conference, Bangalore, India, December 19-21, 1989, Proceedings},
  series       = {Lecture Notes in Computer Science},
  volume       = {405},
  pages        = {242--251},
  publisher    = {Springer},
  year         = {1989},
  url          = {https://doi.org/10.1007/3-540-52048-1\_47},
  doi          = {10.1007/3-540-52048-1\_47},
  bibsource    = {dblp computer science bibliography, https://dblp.org}
}

@article{RamrezAlfonsn1996ComplexityOT,
  title={Complexity of the Frobenius problem},
  author={Jorge L. Ram{\'i}rez-Alfons{\'i}n},
  journal={Combinatorica},
  year={1996},
  volume={16},
  pages={143-147},
  url={https://api.semanticscholar.org/CorpusID:118624008}
}

@book{AroraBarak2009,
  author       = {Sanjeev Arora and
                  Boaz Barak},
  title        = {Computational Complexity - {A} Modern Approach},
  publisher    = {Cambridge University Press},
  year         = {2009},
  url          = {http://www.cambridge.org/catalogue/catalogue.asp?isbn=9780521424264},
  isbn         = {978-0-521-42426-4},
  bibsource    = {dblp computer science bibliography, https://dblp.org}
}

@ARTICLE{McLoughlinHardness1984,
  author={McLoughlin, A.},
  journal={IEEE Transactions on Information Theory}, 
  title={The complexity of computing the covering radius of a code}, 
  year={1984},
  volume={30},
  number={6},
  pages={800-804},
  doi={10.1109/TIT.1984.1056978}}

@article{Brauer1942,
 ISSN = {00029327, 10806377},
 URL = {http://www.jstor.org/stable/2371684},
 author = {Alfred Brauer},
 journal = {American Journal of Mathematics},
 number = {1},
 pages = {299--312},
 publisher = {Johns Hopkins University Press},
 title = {On a Problem of Partitions},
 urldate = {2026-08-28},
 volume = {64},
 year = {1942}
}

@article{Beihoffer_Hendry_Nijenhuis_Wagon_2005,
title={Faster Algorithms for Frobenius Numbers},
volume={12}, url={https://www.combinatorics.org/ojs/index.php/eljc/article/view/v12i1r27},
DOI={10.37236/1924},
number={1},
journal={The Electronic Journal of Combinatorics},
author={Beihoffer, Dale and Hendry, Jemimah and Nijenhuis, Albert and Wagon, Stan},
year={2005},
month={Jun.},
pages={R27}
}

@article{KannanCombinatorica92,
  author       = {Ravi Kannan},
  title        = {Lattice translates of a polytope and the Frobenius problem},
  journal      = {Comb.},
  volume       = {12},
  number       = {2},
  pages        = {161--177},
  year         = {1992},
  url          = {https://doi.org/10.1007/BF01204720},
  doi          = {10.1007/BF01204720},
  bibsource    = {dblp computer science bibliography, https://dblp.org}
}

\end{document}